\documentclass[reprint,aps,pra,a4paper,superscriptaddress,longbibliography,preprintnumbers,nofootinbib,floatfix]{revtex4-2}
\usepackage[utf8]{inputenc}
\usepackage[english]{babel}
\usepackage[T1]{fontenc}
\usepackage{amsmath}
\usepackage{amsthm}
\usepackage{amsfonts}
\usepackage{xcolor}
\usepackage{graphicx}
\usepackage{hyperref}
\usepackage{braket}
\usepackage{mathtools}
\definecolor{mpBlue}{RGB}{21, 101, 192} 
\definecolor{mpRed}{RGB}{198, 40, 40}   
\definecolor{mpGreen}{RGB}{46, 125, 50} 
\hypersetup{citecolor=mpGreen}
\hypersetup{colorlinks=true}
\hypersetup{linkcolor=mpBlue}
\hypersetup{urlcolor=mpBlue}
\DeclareMathOperator{\e}{e}
\let\Re\relax
\DeclareMathOperator\Re{Re}
\let\Im\relax
\DeclareMathOperator\Im{Im}
\DeclareMathOperator\RR{\mathbb{R}}
\DeclareMathOperator\PP{\mathbb{P}}
\newcommand*\dd{\mathrm{d}}
\DeclarePairedDelimiter\abs\lvert\rvert

\newtheorem{lemma}{Lemma}

\begin{document}
\title{Operational theory for photonic circuits: generalizing linear optics beyond quantum theory}

\author{Ismaël Septembre}
\email{ismael.septembre@uni-siegen.de}
\affiliation{Naturwissenschaftlich-Technische Fakultät, Universität Siegen, 57068 Siegen, Germany}
\author{Matthias Kleinmann}
\email{matthias.kleinmann@uni-muenster.de}
\affiliation{Department for Quantum Technology, Universität Münster, Heisenbergstraße 11, 48149 Münster, Germany}
\affiliation{Naturwissenschaftlich-Technische Fakultät, Universität Siegen, 57068 Siegen, Germany}
\author{Martin Plávala}
\email{martin.plavala@uni-hannover.de}
\affiliation{Institut für Theoretische Physik, Leibniz Universität Hannover, 30167 Hannover, Germany}

\begin{abstract}
We use the analogy between the quadrature operators in quantum optics and the phase-space coordinates in a mechanical harmonic oscillator to generalise the theory of linear optics beyond quantum theory. For this, we introduce a framework for analysing photonic circuits within generalised probabilistic theories based on quadrature operators. Using this framework, we ask whether phenomena like the Hong-Ou-Mandel effect, Mach-Zehnder interferometry are specific to quantum theory or also occur across a broader class of theories. We find the latter: the vanishing photon coincidences at the output of a beam splitter also occur for classical light and for a modified version of quantised light. Similarly, for Mach-Zehnder interferometry we find that there is no qualitative difference between quantum theory and these alternative theories. Our framework can serve as a foundation to study generalised theories of complex photonic set-ups and photonic quantum computers.
\end{abstract}

\maketitle


\section{Introduction}
Genuine quantum effects are at the heart of the development of quantum technologies which offer, for example, advantages in quantum communication and quantum sensing and promise speed-ups in quantum computing. Hence, it is crucial to understand which effects are genuinely quantum and are then an essential resource enabling quantum advantages. There is an ongoing effort to distinguish genuine quantum effects from ones that could be observed classically \cite{catani2023interference, t2024hidden, wetterich2025quantum, hance2026noncontextual}. We focus here on quantum effects that occur in photonic set-ups, in particular in interferometric set-ups using only linear optical elements, namely mirrors, phase shifters and beam splitters \cite{knill2001scheme} (in addition to single-photon sources, detectors, \emph{etc}). Such set-ups include, for example, the photonic boson sampler that has been used to demonstrate quantum computational advantage \cite{aaronson2011computational, zhong2020quantum}. In contrast, classical optics is not sufficient to observe other effects that are already established to be genuinely quantum, in particular the violation of Bell inequalities \cite{aspect1981experimental, aspect1982experimental, aspect1982experimental2, brunner2014bell}. A key example of a quantum effect in linear optics is the Hong-Ou-Mandel effect \cite{hong1987measurement, shih1988new}, but it has recently been demonstrated that the vanishing of the coincidence counts can actually be observed with classical states \cite{sadana2019near, na2020classical}. Other canonical set-ups are interferometers, which play a key role in quantum sensing \cite{Degen_2017} and, for example, interaction-free measurements \cite{Elitzur_1993, Kwiat_1995}.

In order to get a clearer understanding of the quantum nature of such effects, it can be useful to step back and study a more general framework in which quantum theory appears as a special case. This is the idea of generalised probabilistic theories which include classical and quantum theories \cite{plavala2023general}. Recently, this framework was extended to include continuous-variable theories such as the harmonic oscillator \cite{plavala2022operational} and the hydrogen atom \cite{plavala2023generalized}.

In this article, we take advantage of this formulation to propose a model for studying photonic circuits in generalised probabilistic theories, in particular by modelling beam splitters and phase shifters. We then consider the Hong-Ou-Mandel and the Mach-Zehnder set-up in classical and quantum optics, as well as in alternative scenarios. We show that the vanishing of the coincidence probability (Hong-Ou-Mandel coincidence dip) on a beam splitter is a general feature of this type of generalised optics, see Figs.~\ref{fig:quantum-coincidence}, \ref{fig:classical-coincidence}, and \ref{fig:sawtooth-coincidence}. Similarly the measurement signal in Mach-Zehnder interferometers does not show qualitative different behaviours, see Figs.~\ref{fig:classical-interference}, \ref{fig:quantum-interference}, and \ref{fig:sawtooth-interference}. Finally we consider the achievable correlations in an interferometric setup, see Fig~\ref{fig:effective-state-spaces}, and different theories have characteristic sets of correlations, however with the set of correlations in classical optics being the most general case, including the correlations of the other theories.


\section{Quantum optical phase space}
A beam splitter is an optical element that splits an incident beam of light into two parts, one reflected and one transmitted. In classical optics, a beam splitter is interpreted as dividing the intensity of the incoming light into two directions. In contrast, a single photon entering a beam splitter exits in a superposition of the states ``photon exiting in output $C$'' and ``photon exiting in output $D$'', see Fig.~\ref{figBeamSplitter}. Thus, quantum and classical optics are fundamentally different in their interpretation and predictions. However, they can be compared because the action of a beam splitter can be represented as the same transfer matrix in both cases \cite{grynberg2010introduction}. In classical optics, electromagnetic fields are conveniently represented using complex amplitudes, whereas in quantum optics, the fields are represented by operators. The correspondence between the two arises because classical field amplitudes are identified with expectation values of quantum quadrature operators.

For the classical case, the beam splitter is described by a unitary matrix relating the input and output electric field amplitudes,
\begin{equation}\label{bsClassical}
    \begin{pmatrix}E_C\\E_D\end{pmatrix} = U_\mathrm{bs}\begin{pmatrix}E_A\\E_B\end{pmatrix}.
\end{equation}
In quantum optics, the quantisation of the electromagnetic fields promotes the classical mode amplitudes to bosonic field operators. For the same unitary transformation, the input--output relations then take an same form,
\begin{equation}\label{bsQuantum}
    \begin{pmatrix}\hat a^\dag_C\\\hat a^\dag_D\end{pmatrix}
    = U_\mathrm{bs} \begin{pmatrix}\hat a^\dag_A\\\hat a^\dag_B\end{pmatrix}.
\end{equation}
This correspondence becomes particularly transparent when expressing the field operators in terms of quadratures:
\begin{equation}\begin{split}\label{eqxpU}
    \vec x^\mathrm{\,(out)} &= \Re(U) \vec x^\mathrm{\,(in)} + \frac{x_0}{p_0} \Im(U) \vec p^\mathrm{\,(in)}, \\
    \vec p^\mathrm{\,(out)} &= -\frac{p_0}{x_0} \Im(U) \vec x^\mathrm{\,(in)} + \Re(U) \vec p^\mathrm{\,(in)},
\end{split}\end{equation}
with $\vec x^\mathrm{\,(in)}=(\hat x_A,\hat x_B)$, $\vec x^\mathrm{\,(out)}=(\hat x_C, \hat x_D)$ and analogously for $\vec p$.

In quantum optics, quadratures are usually dimensionless and hence $x_0=1=p_0$. In order to obtain a clean  phase-space picture (and also to avoid ambiguities in the definition of the quadrature operators), we use the analogy between the electromagnetic field and a mechanical harmonic oscillator with mass $m$ and angular frequency $\omega$, that is, we use the Hamiltonian
\begin{equation}\label{eq:hamilton-base}
    H(r) = \frac{p^2}{2m}+ \frac12m\omega x^2
\end{equation}
where now $x_0=\sqrt{\hbar/m\omega}$ and $p_0=\sqrt{\hbar m\omega}$. We will use this analogy to infer operational properties of optical elements in general operational theories in phase space, beyond classical and quantum optics.

\begin{figure}
    \centering
    \includegraphics[width=0.7\linewidth]{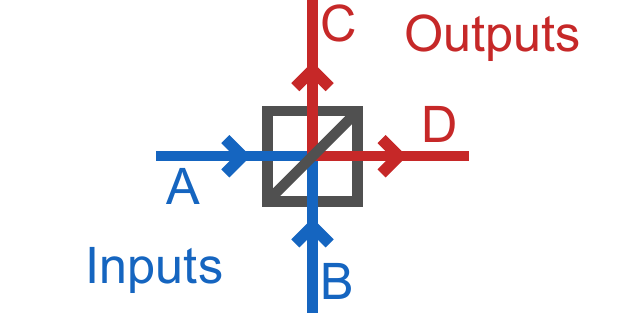}
    \caption{Sketch of an optical beam splitter with input ports $A$, $B$ and output ports $C$, $D$. The electric field amplitudes are transformed according to Eq.~\eqref{bsClassical} and the relation of the mode creation operators is given in Eq.~\eqref{bsQuantum}.}
    \label{figBeamSplitter}
\end{figure}


\section{Generalised probabilistic theories in phase space}
Equation~\eqref{eqxpU} links the quadratures of the input and output modes through the coefficients of the transfer matrix. However, it does not yet tell us anything about the underlying theory. In particular, at this level, classical and quantum optics would seem equivalent. The difference between the theories becomes apparent once we consider the set of admissible states of the theory and the description of the observables and measurements. In this section we review the formulation of harmonic oscillators in phase space presented in Ref.~\cite{plavala2022operational} that precisely captures these differences between classical, quantum, and other operational theories.

We suppose that the states described by the theory are given by a pseudo-probability distribution $\rho(x,p)$ on the phase space $(x,p)$, where $\rho$ is real-valued, but can take negative values and is normalised as $\int_{\RR^2} \rho(x,p) \dd{x} \dd{p} = 1$. Observables of the theory are given by real-valued functions $A(x,p)$ in phase space. We write $\tilde{A}$ for the random variable corresponding to the outcomes of the observable $A$. For a system in state $\rho$ the mean value of the observable $A$ is then given as
\begin{equation}
    \braket{\tilde{A}}_\rho  = \int_{\RR^2} A(x,p) \rho(x,p) \dd x \dd p = \braket{A, \rho},
\end{equation}
where we have introduced the shorthand notation for the integral. The probability that $\tilde{A}$ assumes a value in an interval $I$ is
\begin{equation}
    \PP_\rho (\tilde{A} \in I ) = \int_{\RR^2} g_A(I;x,p) \rho(x,p) \dd x \dd p = \braket{g_A(I), \rho},
\end{equation}
where $g_A(I;x,p)$ is the phase space spectral measure of the operator $A$ which must fulfil \cite{plavala2022operational}
\begin{align}
    \int_{\RR} g_A(a; x,p) \dd{a} &= 1, \\
    \int_{\RR} a g_A(a; x,p) \dd{a} &= A(x,p).
\end{align}
We see that having access to a state and a spectral measure in phase space allows calculating probabilities. In classical physics, the spectral measure is piece-wise constant for any fixed interval $I$, see also Eq.~\eqref{eq:classicalspecfu} below, while there is no intrinsic uncertainty in preparing the states, that is, states can be represented by mixtures of $\delta$-distributions in phase space. In contrast, in quantum theory, in phase space the states are represented by Wigner functions \cite{case2008wigner} that can attain negative values, foreshadowing quantum effects.

\subsection{Harmonic oscillator}
It is useful to consider the spectral decomposition of the Hamiltonian in phase space. The Hamiltonian of a mechanical oscillator, see Eq.~\eqref{eq:hamilton-base}, can be written as
\begin{equation}
    H(r) =\frac12 h_0 r^2 \quad\text{with }
    r=\sqrt{\frac{p^2}{p_0^2} + \frac{x^2}{x_0^2}},
\end{equation}
where $h_0=x_0p_0 \omega$ and $x_0$, $p_0$ as before.
If we assume that the energy of the harmonic oscillator has discrete levels $E_n$, $n=0,1,\dotsc$ and that the phase space spectral measure of the Hamiltonian only depends on the dimensionless phase-space radius $r$, then
\begin{equation}
    g_H(I;x,p) = \sum_{n\colon E_n\in I} T_n (r^2),
\end{equation}
where the decomposition has to fulfil \cite{plavala2022operational}
\begin{equation}
    \sum_{n=0}^\infty T_n(r^2)=1 \quad \text{and}
    \quad \sum_{n=0}^\infty E_n T_n(r^2)=H(r).
\end{equation}
The allowed values that the energy observables can reach are given by the spectral values $E_n$. It remains to specify the states $\rho$ that are available in the theory.

A state $\rho$ is a sharp energy state if $\PP_\rho(\tilde{E} = E) = 1$ holds, that is, if $E$ is the only possible outcome of the energy measurement of state $\rho$. In quantum theory, sharp energy states correspond to energy eigenstates (for pure states). But equivalently one can also define these eigenstates as the time independent pure states. This is not the case in other operational theories, for example already in classical theory there are time dependent sharp energy states.

The time evolution can be introduced by generalizing the Moyal bracket \cite{plavala2023general}. Here we only need to consider the case of harmonic Hamiltonians. Let $\Phi_t$ denote the super-operator that shifts the system in time by $t$, that is, $\Phi_t(\rho)$ is the state $\rho$ shifted forward by time $t$. Therefore, the time evolution in a harmonic potential is a rotation in phase space (see details in Appendix~\ref{appA}):
\begin{equation}\label{eq:timevo}
    \Phi_t(\rho)(r,\theta) = \rho(r,\theta + \omega t),
\end{equation}
where the state $\rho(x,p)$ is written in polar coordinates $\rho(r,\theta)$ with $x=x_0 r \cos(\theta)$ and $p=p_0 r \sin(\theta)$. Thus, if a state has no angular dependence, it does not evolve in time. This is the case for energy eigenstates in quantum theory because Wigner functions of the harmonic oscillator are invariant under rotation around the origin. In general, this is not the case, and two sharp energy states associated with the same energy $E_n$ can differ in their individual phase. Finally, the evolution of a state passing through a phase shifter can be expressed as
\begin{equation}\label{eq:phase}
    \Phi_\varphi (\rho)(r,\theta) = \rho(r,\theta + \varphi),
\end{equation}
where we used that a phase-shifter induces $a^\dag\mapsto \e^{i \varphi}a^\dag$ or, correspondingly for the quadratures,
\begin{equation}\begin{split}
    x^\mathrm{\,(out)}&=\cos(\varphi)x^\mathrm{\,(in)} + \frac{x_0}{p_0} \sin(\varphi)p^\mathrm{\,(in)},\\
    p^\mathrm{\,(out)}&=-\frac{p_0}{x_0}\sin(\varphi)x^\mathrm{\,(in)} + \cos(\varphi)p^\mathrm{\,(in)}.
\end{split}\end{equation}


\subsection{Beam splitters}
In bipartite configurations where two light modes are considered, as is the case in the Hong-Ou-Mandel effect, a joint state can be constructed as a product of two modes. To be concrete, let $\rho_1(r_1,\theta_1)$ and $\rho_2(r_2,\theta_2)$ be states in polar coordinates of two harmonic oscillators. Then their joint state is given by:
\begin{equation}
    \rho(r_1,\theta_1,r_2,\theta_2) = \rho_1(r_1,\theta_1) \rho_2(r_2,\theta_2).
\end{equation}
It is now convenient to replace $\theta_1$ and $\theta_2$ with relative and absolute phases. We define the absolute phase as $\theta_\mathrm{abs} = (\theta_1 + \theta_2) / 2$ and the relative phase as $\theta_\mathrm{rel} = \theta_2 - \theta_1$. Then, assuming both oscillators have the same angular frequency, we get that the time evolution only shifts the absolute phase but leaves the relative phase invariant, that is,
\begin{multline}
    (\Phi_t \otimes \Phi_t)(\rho)(r_1,r_2,\theta_\mathrm{abs},\theta_\mathrm{rel}) =\\
    \rho(r_1,r_2,\theta_\mathrm{abs} +\omega t,\theta_\mathrm{rel}).
\end{multline}

The action of a beam splitter on a bipartite state can be derived from Eq.~\eqref{eqxpU}: let $\Phi_U$ be the linear map on states representing the beam splitter corresponding to a unitary matrix $U$. Then the action of $\Phi_U$ on a bipartite state $\rho$ is best expressed in Cartesian coordinates:
\begin{multline} \label{eq:beamSplitterRho}
    \Phi_U (\rho)(\vec{x}, \vec{p}) =
    \rho(\Re(U)^T \vec{x} - \frac{x_0}{p_0} \Im(U)^T \vec{p}, \\\frac{p_0}{x_0} \Im(U)^T \vec{x} + \Re(U)^T \vec{p}).
\end{multline}
We will later use the following result that shows that beam splitters commute with the time evolution if angular frequencies of the input and output modes are the same.
\begin{lemma} \label{lemma:BScommT}
Consider a beam splitter described by a unitary matrix $U$ and inducing the transformation $\Phi_U$ as in Eq.~\eqref{eq:beamSplitterRho}. Then a phase shift as in Eq.~\eqref{eq:phase} on both input ports or both output ports commutes with $\Phi_U$, that is, $\Phi_U\circ (\Phi_\varphi\otimes \Phi_\varphi) = (\Phi_\varphi\otimes \Phi_\varphi) \circ \Phi_U$.
\end{lemma}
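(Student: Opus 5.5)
The idea is to recast both maps as linear changes of variables on the four-dimensional phase space $(\vec x,\vec p)\in\RR^4$ and then check that the corresponding matrices commute. To make the two quadratures comparable, I will first pass to the dimensionless variables $\vec\xi=\vec x/x_0$ and $\vec\eta=\vec p/p_0$. In these variables, Eq.~\eqref{eq:beamSplitterRho} says that $\Phi_U(\rho)(\vec\xi,\vec\eta)=\rho\bigl(M_U^{-1}(\vec\xi,\vec\eta)\bigr)$, where
\begin{equation*}
M_U=\begin{pmatrix}\Re U & \Im U\\ -\Im U & \Re U\end{pmatrix}
\end{equation*}
is the real $4\times4$ representation of the quadrature transformation in Eq.~\eqref{eqxpU}. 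I will confirm that the inverse appearing in Eq.~\eqref{eq:beamSplitterRho} is indeed $M_U^{-1}=M_U^T$, which follows from unitarity of $U$. This step is where the real content of the argument sits.

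Next I will write the phase shift in the same language. In the coordinates $\vec\xi,\vec\eta$, Eq.~\eqref{eq:phase} acts on a single mode as the rotation $(\xi,\eta)\mapsto(\cos\varphi\,\xi-\sin\varphi\,\eta,\ \sin\varphi\,\xi+\cos\varphi\,\eta)$ inside the argument of $\rho$. Applying the same phase to both modes therefore gives $(\Phi_\varphi\otimes\Phi_\varphi)(\rho)=\rho\circ R_\varphi$ with
\begin{equation*}
R_\varphi=\begin{pmatrix}\cos\varphi\,\mathbb{1} & -\sin\varphi\,\mathbb{1}\\ \sin\varphi\,\mathbb{1} & \cos\varphi\,\mathbb{1}\end{pmatrix}.
\end{equation*}
This matrix is just $M_{V}$ (or its inverse, depending on the sign convention) for the scalar unitary $V=\e^{i\varphi}\mathbb{1}$.

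For any two linear maps $A,B$, pulling back by both in succession composes in reversed order, so $\rho\mapsto\rho\circ A\circ B$. The claimed identity $\Phi_U\circ(\Phi_\varphi\otimes\Phi_\varphi)=(\Phi_\varphi\otimes\Phi_\varphi)\circ\Phi_U$ is then equivalent to the matrix identity $R_\varphi M_U^{T}=M_U^{T}R_\varphi$. This holds because $U\mapsto M_U$ is a ring homomorphism from complex to real matrices, so $M_U^T=M_{U^\dagger}$ and $R_\varphi=M_{\e^{\pm i\varphi}\mathbb{1}}$. Since $\e^{\pm i\varphi}\mathbb{1}$ commutes with $U^\dagger$, the images commute as well. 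As a cross-check, one can verify the commutation directly by block multiplication: $R_\varphi=\cos\varphi\,I_4+\sin\varphi\,J$ with $J=\begin{pmatrix}0&-\mathbb{1}\\ \mathbb{1}&0\end{pmatrix}$, and $J$ commutes with any matrix of the form $\begin{pmatrix}a&b\\-b&a\end{pmatrix}$.

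The argument has the same shape whether the phase shifts act on the input ports or on the output ports, because the statement is a single commutation relation of maps. The main obstacle is bookkeeping rather than anything conceptual: the sign conventions (whether $\theta+\varphi$ corresponds to $R_\varphi$ or $R_{-\varphi}$), the rescaling factors $x_0/p_0$, and the reversal of order under pullback all have to be kept consistent. Conveniently, commutation is insensitive to all of these, since the result holds for every $\varphi$ and is preserved under transposition and inversion.
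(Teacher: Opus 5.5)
Your argument is correct and is essentially the paper's proof. The paper also reduces the claim to the commutation of the linear phase-space maps: it writes $R_U=\openone\otimes\Re(U)+S\otimes\Im(U)$ and the double phase shift as $R_\varphi\otimes\openone$ with $R_\varphi=\cos(\varphi)\openone+\sin(\varphi)S$. This is exactly your block-matrix cross-check with $J$, except that the paper keeps $x_0/p_0$ inside $S$ rather than rescaling.

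Your homomorphism framing is a clean repackaging of the same fact: with your conventions the double phase shift is $M_V$ for the scalar unitary $V=\e^{-i\varphi}\openone$. Your explicit check that Eq.~\eqref{eq:beamSplitterRho} pulls back by $M_U^T=M_U^{-1}$ supplies bookkeeping the paper leaves implicit.
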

\begin{proof}
The phase-space transformation in Eq.~\eqref{eqxpU} can be written as $(\vec x,\vec p)\mapsto R_U(\vec x,\vec p)$ with $R_U= \openone \otimes \Re(U) + S\otimes \Im (U)$, where $\otimes$ is the Kronecker product and
\begin{equation}
    S= \begin{pmatrix}0&\frac{x_0}{p_0}\\-\frac{p_0}{x_0}&0\end{pmatrix}.
\end{equation}
A phase shift $\theta\mapsto \theta+\varphi$ of one mode induces the phase-space transformation $R_\varphi=\cos(\varphi)\openone+ \sin(\varphi) S$ and hence, for two modes the phase-space transformation is $(\vec x,\vec p)\mapsto (R_\varphi\otimes \openone)(\vec x,\vec p)$. The claim follows directly from $[R_U, R_\varphi\otimes \openone]=0$.
\end{proof}


\section{Hong-Ou-Mandel effect in generalised probabilistic theories}
In a typical Hong-Ou-Mandel experiment as depicted in Fig.~\ref{figHOM}, two photons are sent into the distinct input ports of a beam splitter simultaneously. In quantum optics, the photons are then found to exit the beam splitter in the same output mode. Clearly the whole effect is centred around each input mode containing one photon but one output mode containing no photons. We capture this more exactly via the coincidence probability.

\begin{figure}
    \centering
    \includegraphics[width=0.7\linewidth]{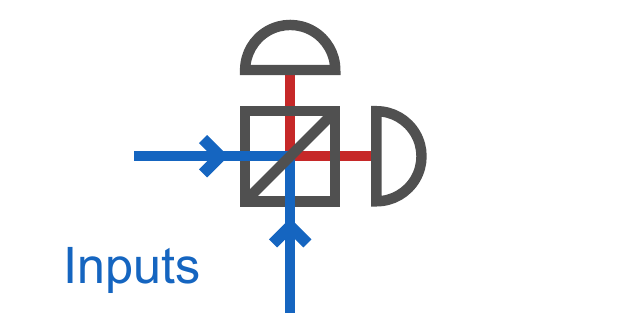}
    \caption{Sketch of a Hong-Ou-Mandel set-up consisting of two input ports, beam splitter, and coincidence measurement on the outputs of the beam splitter.}
    \label{figHOM}
\end{figure}

Given a state $\rho_{AB}$ of two modes (or, equivalently, two harmonic oscillators) labelled as $A$ and $B$, the coincidence probability is the probability of measuring both $A$ and $B$ to have energy greater than some energy threshold $E_\mathrm{th}$, that is:
\begin{equation}\label{probaco}\begin{split}
    P_\mathrm{co}(\rho_{AB})
    &= \PP_{\rho_{AB}} (\tilde E^{(A)} > E_\mathrm{th} \text{ and } \tilde E^{(B)} > E_\mathrm{th}) \\
    &= \braket{g_H^{(A)}((E_\mathrm{th}, \infty)) g_H^{(B)}((E_\mathrm{th}, \infty)), \rho_{AB}}.
\end{split}\end{equation}
In quantum theory, one would choose $E_\mathrm{th}$ to be between the vacuum and single photon energy. Then the coincidence probability is the probability of observing at least one photon in each mode. In theories without quantised energy levels (and hence without photons) one may choose $E_\mathrm{th}$ to be an arbitrary threshold, for example for detectors based on the photo-effect, where an energy gap has to be passed.

We can consider the same experiment in an arbitrary generalised probabilistic theory in phase space. We start with a state $\rho_{AB}$ such that $P_\mathrm{co}(\rho_{AB}) = 1$ and we want to investigate the coincidence probability of the output of the beam splitter $P_\mathrm{co}(\Phi_U (\rho_{AB}))$. For this we assume that the energy phase-space spectral measure does not depend on the phase $\theta$, $g_H(I,r,\theta)=g_H(I,r)$. This is certainly the case in quantum and classical theory. Then, due to
\begin{multline}
    \braket{g_H(I),\rho}=
    \int g_H(I,r) \rho(r,\theta) \,x_0p_0r \,\dd r \,\dd\theta=\\
    \int g_H(I,r) \left(\int \rho(r,\theta) \dd\theta \right)x_0p_0r \dd r,
\end{multline}
the coincidence probability does not depend on the phases of the output states and hence, by virtue of Lemma~\ref{lemma:BScommT}, it also does not depend on the global phase of the input states. However, the final coincidence probability depends on the relative phase between the two input states. For this reason, we introduce the visibility $V$, computed from the average of the coincidence probability over the relative phase:
\begin{equation}\label{eqvisibility}
    V = 1 - \frac{1}{2\pi}\int_{0}^{2\pi} P_\mathrm{co}(\Phi_U(\rho_{AB}(\theta_\mathrm{rel}))) \dd\theta_\mathrm{rel},
\end{equation}
where $\rho_{AB}(\theta_\mathrm{rel})$ is the bipartite state with relative phase $\theta_\mathrm{rel}$.

\begin{figure}
    \centering
    \includegraphics[width=0.7\linewidth]{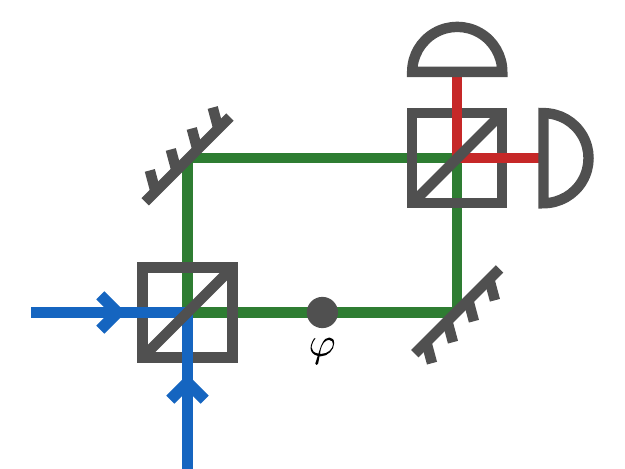}
    \caption{Sketch of a Mach-Zehnder interferometer. The two input modes first interfere on a beam splitter, then the phase of one of the arms is shifted by $\varphi$, then the modes again enter a beam splitter and a coincidence measurement is performed on the output.}
    \label{figHOM_interferometer}
\end{figure}

It was also suggested in Ref.~\cite{sadana2019near} that a Mach-Zehnder interferometric set-up could provide complementary information that is not directly available through standard Hong-Ou-Mandel set-up. The complete set-up is depicted in Fig.~\ref{figHOM_interferometer}: the two input modes interfere on the first beam splitter, then the phase of one of the arms is shifted by an angle $\varphi$ using a phase shifter, and after interfering on another beam splitter the coincidence is measured again. The measurement signal $S(\varphi)$ of such an experiment is a coincidence probability that depends on the angle $\varphi$. Below we also describe this set-up in phase space for arbitrary harmonic oscillator modelling the two modes.

In the following, we take some examples of generalised probabilistic theories in phase space and compute the visibility of the Hong-Ou-Mandel coincidence dip and the Mach-Zehnder set-up. Throughout we use the balanced beam splitter with
\begin{equation}
    U = \frac1{\sqrt2}\begin{pmatrix}1&1\\1 & -1\end{pmatrix}.
\end{equation}

\subsection{Quantum theory}
First, we consider the case of quantum theory. In quantum optics, it is well known that the visibility of the Hong-Ou-Mandel coincidence dip is 100\% \cite{bouchard2020two}. Our framework yields the same result, as noted. The initial state $\ket{11}$ consists of one photon in each input port, so $P_\mathrm{co}(\rho_{AB}) = 1$. After the beam splitter, the state becomes, by definition, $\rho_{CD} \equiv \Phi_U (\rho_{AB})$. Standard quantum optics computation shows that this state is $(\ket{20} - \ket{02}) / \sqrt{2}$, a superposition of both photons in output $C$ and both photons in output $D$. In other words, there is no possibility for the photons to come out in different output ports simultaneously: $P_\mathrm{co}(\Phi_U (\rho_{AB})) = 0$. Since in quantum theory energy eigenstates are time independent, the state does not depend on the relative phase and thus we get $V = 100\%$. This is usually referred to as the Hong-Ou-Mandel coincidence dip.

In the Mach-Zehnder interferometer, the state after the phase shifter is $(\ket{20} - \e^{2i\varphi} \ket{02}) / \sqrt{2}$, which is transformed by the second beam splitter to the final state $\cos(\varphi) \ket{11} - i \sin(\varphi) (\ket{20} + \ket{02}) / \sqrt{2}$. The coincidence measurement then gives the measurement signal
\begin{equation} \label{eq:QT-measurement}
    S(\varphi) = \cos(\varphi)^2
\end{equation}
that represents the interference on the beam splitters, in other words, the coincidence probability at the output of the entire interferometer.

\begin{figure}
    \centering
    \includegraphics[width=\linewidth]{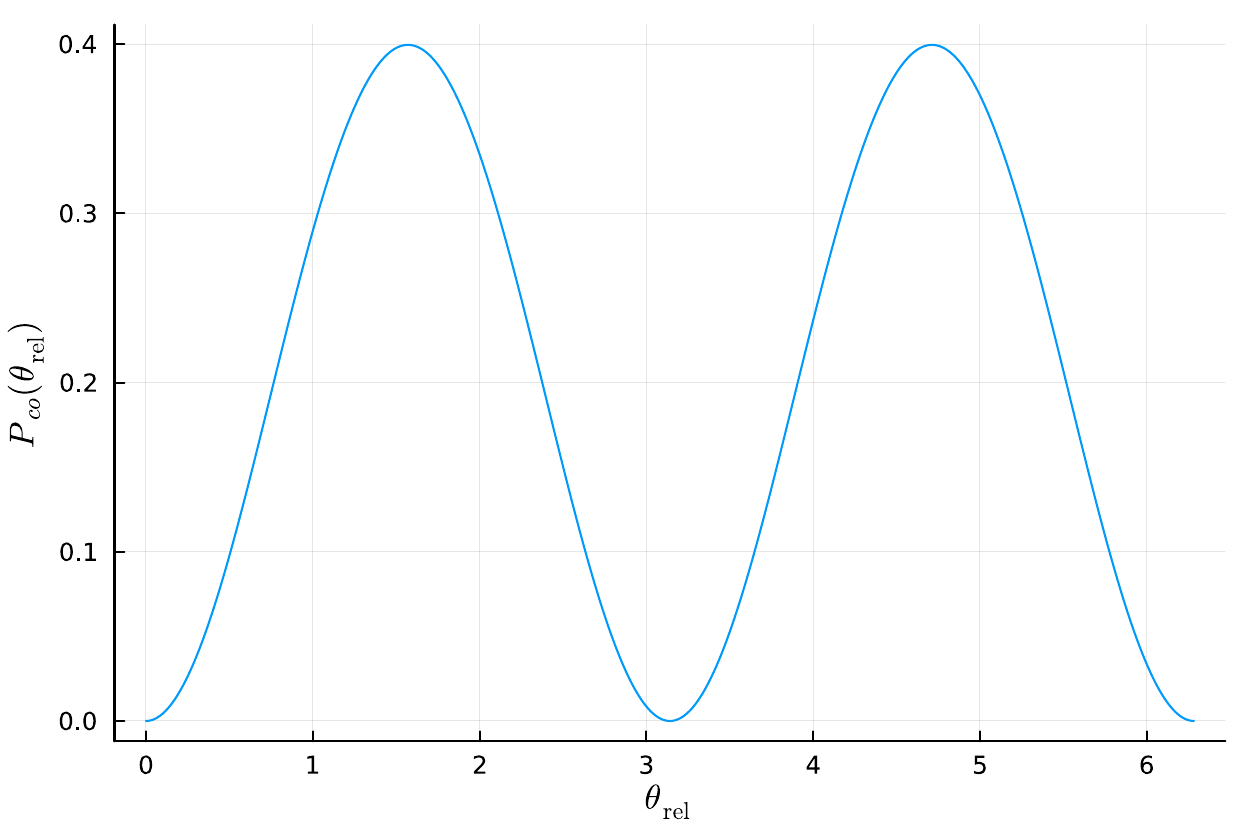}
    \caption{Coincidence probability in the Hong-Ou-Mandel set-up in quantum theory as a function of the initial relative phase between the two input states $\theta_\mathrm{rel}$. The input states are coherent states with parameters $\alpha_1^\mathrm{ini}=1$ and $ \alpha_2^\mathrm{ini}= \e^{i\theta_\mathrm{rel}}$.}
    \label{fig:quantum-coincidence}
\end{figure}

\begin{figure}
    \centering
    \includegraphics[width=\linewidth]{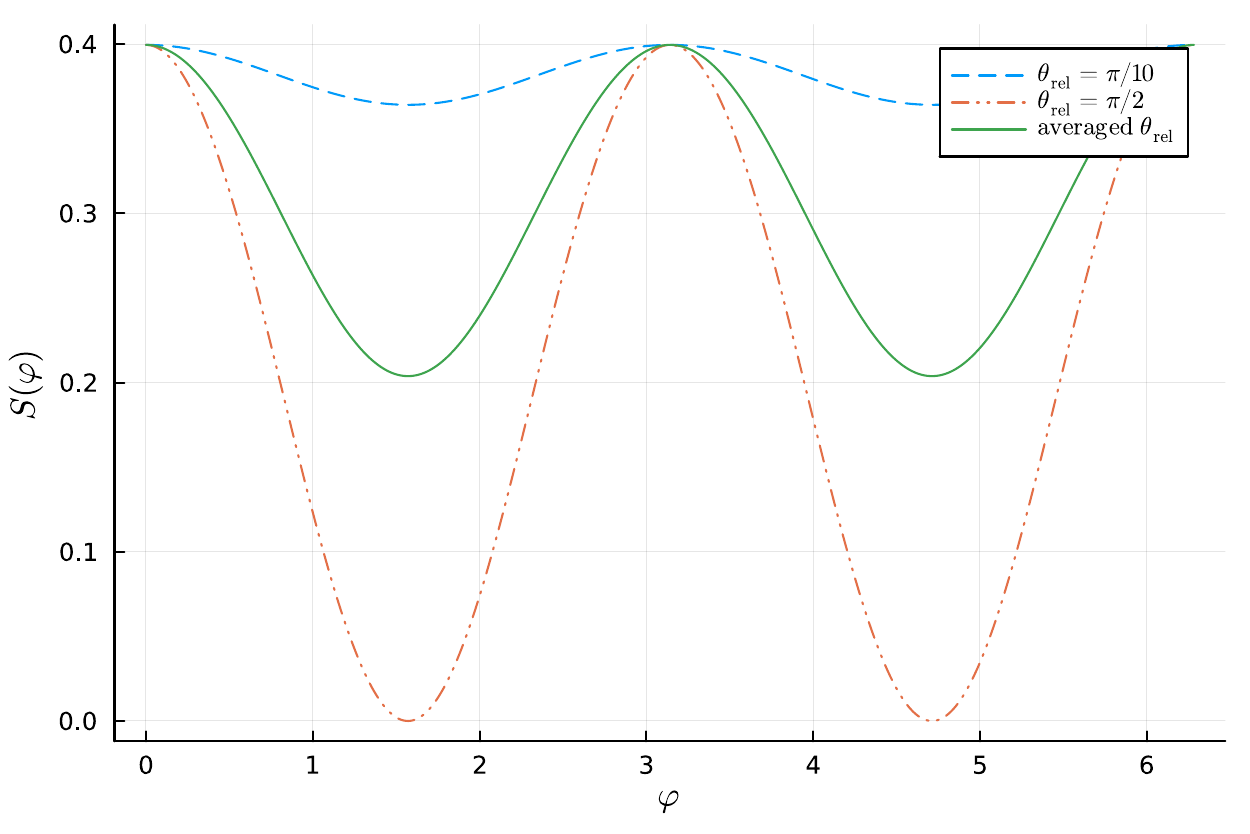}
    \caption{Measurement signal $S(\varphi)$ in the Mach-Zehnder interferometer in quantum theory as a function of the phase shift $\varphi$ introduced in one arm of the interferometer for different initial relative phases $\theta_\mathrm{rel}=\pi/10$ (dashed blue line), $\theta_\mathrm{rel}=\pi/2$ (red dotted-dashed line), and averaged over all relative phases (green solid line). The input states are coherent states with parameters $\alpha_1^\mathrm{ini}=1$ and $ \alpha_2^\mathrm{ini}= \e^{i\theta_\mathrm{rel}}$.}
    \label{fig:quantum-interference}
\end{figure}

The results can be generalised to coherent states instead of eigenstates of the Hamiltonian. We consider coherent states with mean energy equal to the energy of a single photon. The coherent state is characterised by a complex number $\alpha$ and thus the state is usually denoted as $\ket{\alpha}$. The real and imaginary parts of $\alpha$ correspond to the mean position and momentum of the coherent state. As we show next, phase shifters and beam splitters preserve the structure of coherent states and the coincidence measurements between two coherent states can also be expressed in terms of the respective complex numbers $\alpha_1$ and $\alpha_2$. The free time evolution of a coherent state corresponds to just changing the phase of $\alpha$: the time evolution is $\alpha(t) = \e^{-i \omega t} \alpha(0)$. Thus, unlike the eigenstates, the time evolution of coherent states is nontrivial, so any combination of input coherent states will have a well defined relative phase $\theta_\mathrm{rel}$.

Given two initial coherent states $\ket{\alpha_1^\mathrm{ini}}$ and $\ket{\alpha_2^\mathrm{ini}}$, using $\ket{\alpha} = \e^{-\abs{\alpha}^2/2} \e^{\alpha a^\dag} \ket{0}$, the output states after a general beam splitter are $\ket{\alpha^\mathrm{f}_1}$ and $\ket{\alpha^\mathrm{f}_2}$ with
\begin{equation}
    \begin{pmatrix}\alpha_1^\mathrm{f}\\\alpha_2^\mathrm{f}\end{pmatrix}
    = U^*
    \begin{pmatrix}\alpha_1^\mathrm{ini}\\\alpha_2^\mathrm{ini}\end{pmatrix}.
\end{equation}
It is important to note here that both phase shifters and beam splitters map two coherent states to two coherent states. In particular, the resulting state is still a product of coherent states.

For two coherent states, the coincidence probability is the product of probabilities to detect at least one photon
\begin{equation}
    P_\mathrm{co}(\ket{\alpha_1^\mathrm{f}} \ket{\alpha_2^\mathrm{f}}) = (1 - \abs{\braket{0|\alpha_1^\mathrm{f}}}^2) (1 - \abs{\braket{0|\alpha_2^\mathrm{f}}}^2).
\end{equation}
This, combined with $\braket{0|\alpha} = \e^{-\abs{\alpha}^2/2}$ yields
\begin{equation}
    P_\mathrm{co}(\ket{\alpha_1^\mathrm{f}} \ket{\alpha_2^\mathrm{f}}) = (1 - \e^{-\abs{\alpha_1^\mathrm{f}}^2})(1 - \e^{-\abs{\alpha_2^\mathrm{f}}^2}).
\end{equation}
In particular for two initial states with $\abs{\alpha_1^\mathrm{ini}}=\abs{\alpha_2^\mathrm{ini}}=1$, one has an initial coincidence probability of only $P_\mathrm{co}\approx 40\%$. Although it may be experimentally suitable to normalize all coincidence probabilities by this 40\%, we keep here the unnormalised form to get a result comparable to the other theories.

The coincidence probability as a function of the relative input phase is plotted in Fig.~\ref{fig:quantum-coincidence}. By integration over the relative phase, we obtain in this case a visibility of $V \approx 80\%$. In the interferometric set-up, we plot the output coincidence probability in Fig.~\ref{fig:quantum-interference}.

\subsection{Classical theory}
As already argued and experimentally demonstrated in Ref.~\cite{sadana2019near}, one can also observe the Hong-Ou-Mandel coincidence dip without requiring quantum optics if one is capable of controlling the relative phase of the incoming light. We will obtain the same result, and moreover show that in the Mach-Zehnder interferometer, classical theory also gives a measurement signal qualitatively similar to what we obtained from quantum theory.

\begin{figure}
    \centering
    \includegraphics[width=\linewidth]{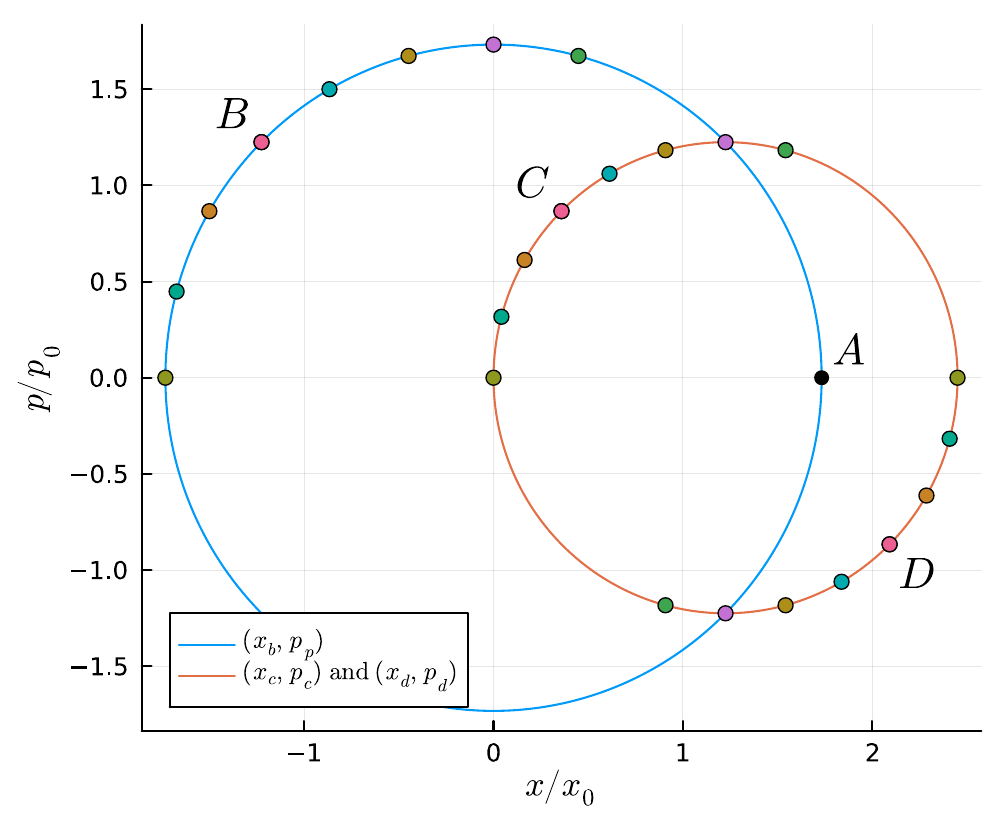}
    \caption{Points in phase space before and after passing through a beam splitter. Point $A$ represents the state of the first input oscillator which in our formulation is fixed. Points $B$ are the states of the second input oscillator for various values of $\theta_\mathrm{rel}$. Points $C$ and $D$ are the states of the output oscillators, these points are colour-coded so that same colour of points $B$, $C$, $D$ corresponds to the same relative phase $\theta_\mathrm{rel}$.}
    \label{fig:points}
\end{figure}

\begin{figure}
    \centering
    \includegraphics[width=\linewidth]{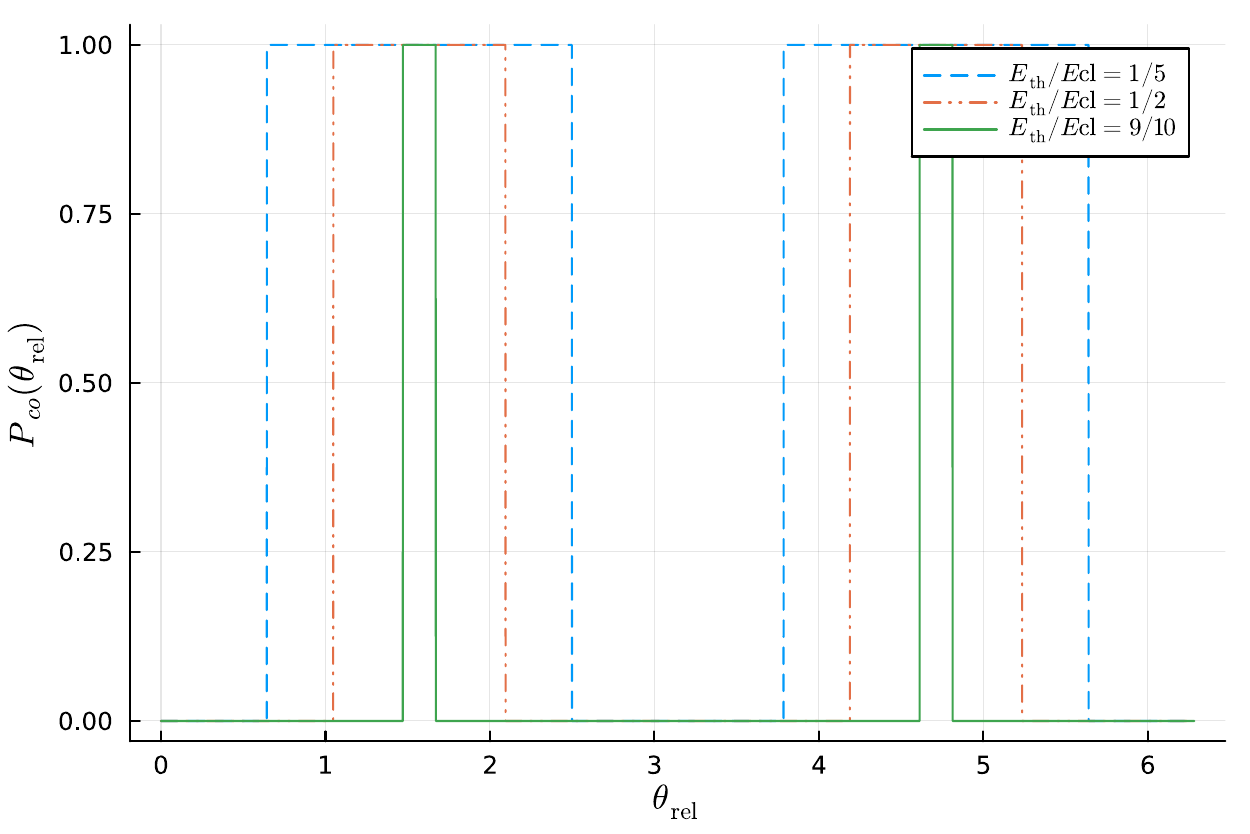}
    \caption{Probability of coincidence in classical theory in the Hong-Ou-Mandel set-up as a function of the initial relative phase $\theta_\mathrm{rel}$ for different energy thresholds $E_\mathrm{th}/E_\mathrm{cl}=1/5$ dashed blue line, $E_\mathrm{th}/E_\mathrm{cl}=1/2$ dashed-dotted red line, and $E_\mathrm{th}/E_\mathrm{cl}=9/10$ solid green line. The visibility of the Hong-Ou-Mandel coincidence dip (integral over $\theta_\mathrm{rel}$, see Eq.~\eqref{eqvisibility}) goes to $V\to 100\%$ as $E_\mathrm{th}\to E_\mathrm{cl}$.}
    \label{fig:classical-coincidence}
\end{figure}

\begin{figure}
    \centering
    \includegraphics[width=\linewidth]{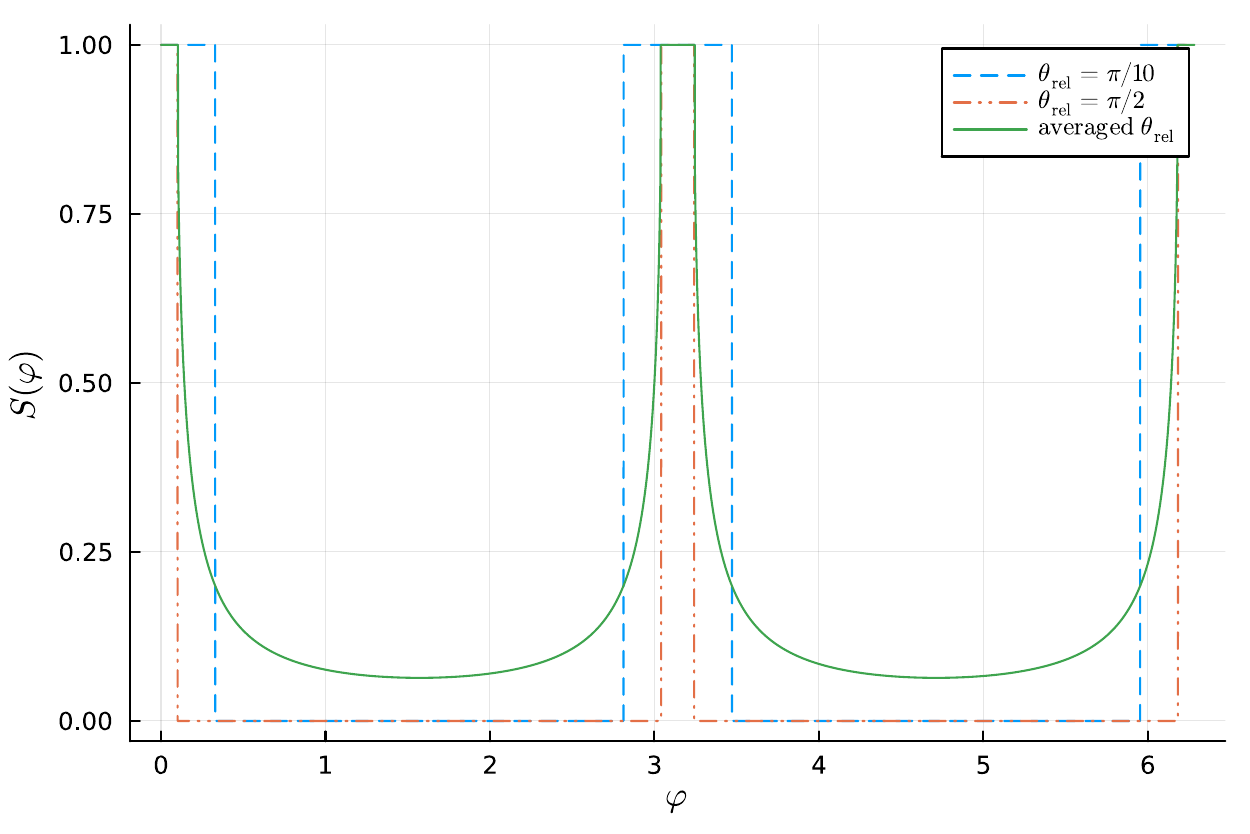}
    \caption{Output coincidence probability in Mach-Zehnder interferometer in classical theory as a function of the phase shift $\varphi$ introduced in one arm of the interferometer for different initial relative phases $\theta_\mathrm{rel}=\pi/10$ (dashed blue line), $\theta_\mathrm{rel}=\pi/2$ (red dotted-dashed line), and averaged over all initial relative phases (green solid line). Here we used $E_\mathrm{th}/E_\mathrm{cl}=9/10$.}
    \label{fig:classical-interference}
\end{figure}

In classical theory of electromagnetic radiation, the pure states correspond to Dirac delta distributions in phase space and the phase space spectral measure of the energy observable is given as:
\begin{equation}\label{eq:classicalspecfu}
    g_H(I; x,p) =
    \begin{cases}
    1 & H(x,p) \in I \\
    0 & H(x,p) \notin I.
    \end{cases}
\end{equation}
Classical theory does not have quantised energy levels, but the detection of a single photon can be mimicked by setting an arbitrary energy threshold $E_\mathrm{th}$, see Eq.~\eqref{probaco}. We take two classical harmonic oscillators (one per input port) with equal energy described by the joint state
\begin{multline}\label{eq:sharpstate}
    \rho_{AB}(x_1,p_1, x_2, p_2) = \\ \delta(x_1 - x_a) \delta(p_1 - p_a) \delta(x_2 - x_b) \delta(p_2 - p_b).
\end{multline}
By virtue of Lemma~\ref{lemma:BScommT}, a global phase shift commutes with the action of a beam splitter and, furthermore, the detection of the classical light is insensitive to a global phase. Hence we can set the phase of the first oscillator to zero. We can set the energy to an arbitrary value, but in order to match the energy of a single photon, we use $E_\mathrm{cl}=3 h_0/2$ and hence the initial coordinates read
\begin{equation}\label{eq:classinips}\begin{split}
    x_a &=r_\mathrm{cl} x_0, \\
    p_a &=0, \\
    x_b &=r_\mathrm{cl} x_0 \cos(\theta_\mathrm{rel}), \\
    p_b &=r_\mathrm{cl} p_0 \sin(\theta_\mathrm{rel}),
\end{split}\end{equation}
with $r_\mathrm{cl}=\sqrt 3$. After passing through the beam splitter, we get the transformed state
\begin{multline}
    \Phi_U(\rho_{AB}) (x_1,p_1, x_2, p_2) = \\ \delta(x_1 - x_c) \delta(p_1 - p_c) \delta(x_2 - x_d) \delta(p_2 - p_d),
\end{multline}
described by the coordinates transformed according to Eq.~\eqref{eqxpU}
\begin{equation}\label{statethetarel}\begin{split}
    x_c &= \frac{r_\mathrm{cl} x_0}{\sqrt{2}} (1 + \cos(\theta_\mathrm{rel})), \\
    p_c &= \frac{r_\mathrm{cl} p_0}{\sqrt{2}} \sin(\theta_\mathrm{rel}), \\
    x_d &= \frac{r_\mathrm{cl} x_0}{\sqrt{2}} (1 + \cos(\theta_\mathrm{rel} + \pi)), \\
    p_d &= \frac{r_\mathrm{cl} p_0}{\sqrt{2}} \sin(\theta_\mathrm{rel} + \pi),
\end{split}\end{equation}
see Fig.~\ref{fig:points} for a plot of the initial and final positions of these points as a function of $\theta_\mathrm{rel}$. Given the coordinates of these points, computing the coincidence probability of the transformed state $P_\mathrm{co}(\Phi_U(\rho_{AB}))$ finally requires setting the energy threshold $E_\mathrm{th}$. There is a clear upper bound as we need the input state to have $P_\mathrm{co}(\rho_{AB}) = 1$, so $E_\mathrm{th}<E_\mathrm{cl}$ must hold. Other than that, we are free to set the energy threshold $E_\mathrm{th}$ arbitrarily. In Fig.~\ref{fig:classical-coincidence} we plot the coincidence for different values of the energy threshold $E_\mathrm{th}$, showing that one can find suitable $\theta_\mathrm{rel}$ such that the coincidence probability is zero. Moreover we also see that as $E_\mathrm{th} $ approaches $E_\mathrm{cl}$, we get $P_\mathrm{co}(\Phi_U(\rho_{AB})) \approx 0$ for almost all values of $\theta_\mathrm{rel}$ and hence a visibility of $V\approx 100\%$. One can conclude the same from Fig.~\ref{fig:points}: for $E_\mathrm{th}/E_\mathrm{cl} = 1 - \varepsilon$ for small $\varepsilon > 0$ we have that almost any point inside the blue circle representing the initial states in Fig.~\ref{fig:points} has energy below $E_\mathrm{th}$. Hence, whenever one of the two outputs is inside the blue circle, the coincidence is zero. But this happens almost always since the output points are antipodal. Thus, even if we average over the initial relative phase $\theta_\mathrm{rel}$, the visibility approaches $V \to 100\% $ as $\varepsilon \to 0$.

Finally, we consider the Mach-Zehnder interferometer. In this case, using the already outlined methods, one can again compute the output coincidence probability. We do this in cases when the initial relative phase $\theta_\mathrm{rel}$ is set to a concrete value, but also when we average over all possible $\theta_\mathrm{rel}$. The results, see Fig.~\ref{fig:classical-interference}, show that classical theory produces a measurement signal that is only quantitatively but not qualitatively different from the one obtained from quantum theory given in Eq.~\eqref{eq:QT-measurement}.

\subsection{Sawtooth theory}\label{sec:sawtooth}
Sawtooth theory is a general phase-space theory, where the phase-space spectral measure of the Hamiltonian has a sawtooth (or triangular) shape, and the energy spectrum is $E_n=h_0\,n/2$ with $n=0,1,2,\dotsc$. This theory allows for rather classical or rather quantum states \cite{plavala2022operational}, but we will focus on states without preparation uncertainty, that is, Dirac delta distributions in phase space as in Eq.~\eqref{eq:classicalspecfu}. Note that the sawtooth theory as defined in Ref.~\cite{plavala2022operational} has an energy spacing of $h_0/2$ and the first excited state has energy $h_0/2$, hence being at variance with quantum theory.

To calculate the coincidence probability $P_\mathrm{co}$, we set the energy threshold halfway between $E_0$ and $E_1$, that is, $E_\mathrm{th}=h_0/4$. According to Ref.~\cite{plavala2022operational}, the spectral measure yields
\begin{equation}
    g_H((E_\mathrm{th},\infty); r) = 1-T_0(r^2)= \min(r^2,1).
\end{equation}
We use the same states as in classical theory, see Eq.~\eqref{eq:sharpstate} and Eq.~\eqref{eq:classinips}, but with $r_\mathrm{cl}$ replaced by $1$ corresponding to the first excited state with $H=h_0/2$. Then we have
\begin{equation}
    P_\mathrm{co}(\Phi_U (\rho_{AB})) = \min(r_c^2,1)\min(r_d^2,1),
\end{equation}
where $r_c$ and $r_d$ are given by Eq.~\eqref{statethetarel} (again with $r_\mathrm{cl}$ replaced by $1$). For the relative phase $\theta_\mathrm{rel}=0$ ($\theta_\mathrm{rel}=\pi$) we have $r_d=0$ ($r_c=0$) which implies that the coincidence probability is zero.

However, integrating over all relative phases provides a nonzero coincidence probability in general, as can be seen from Fig.~\ref{fig:sawtooth-coincidence}, where the coincidence probability $P_\mathrm{co}=1-\abs{\cos(\theta_\mathrm{rel})}$ is plotted as a function of the relative phase between the two inputs $\theta_\mathrm{rel}$. We see that contrary to classical theory, the coincidence probability does not vanish when integrating the phase. Instead, it reaches a maximum visibility of $V = 2/\pi \approx 64\%$. Note however that this maximum is only valid for the specific states we considered: in principle, sawtooth theory can accommodate states that are negative in some phase-space regions \cite{plavala2022operational} that may allow reaching higher visibilities, but in turn, are further away from classical theory.

\begin{figure}
    \centering
    \includegraphics[width=\linewidth]{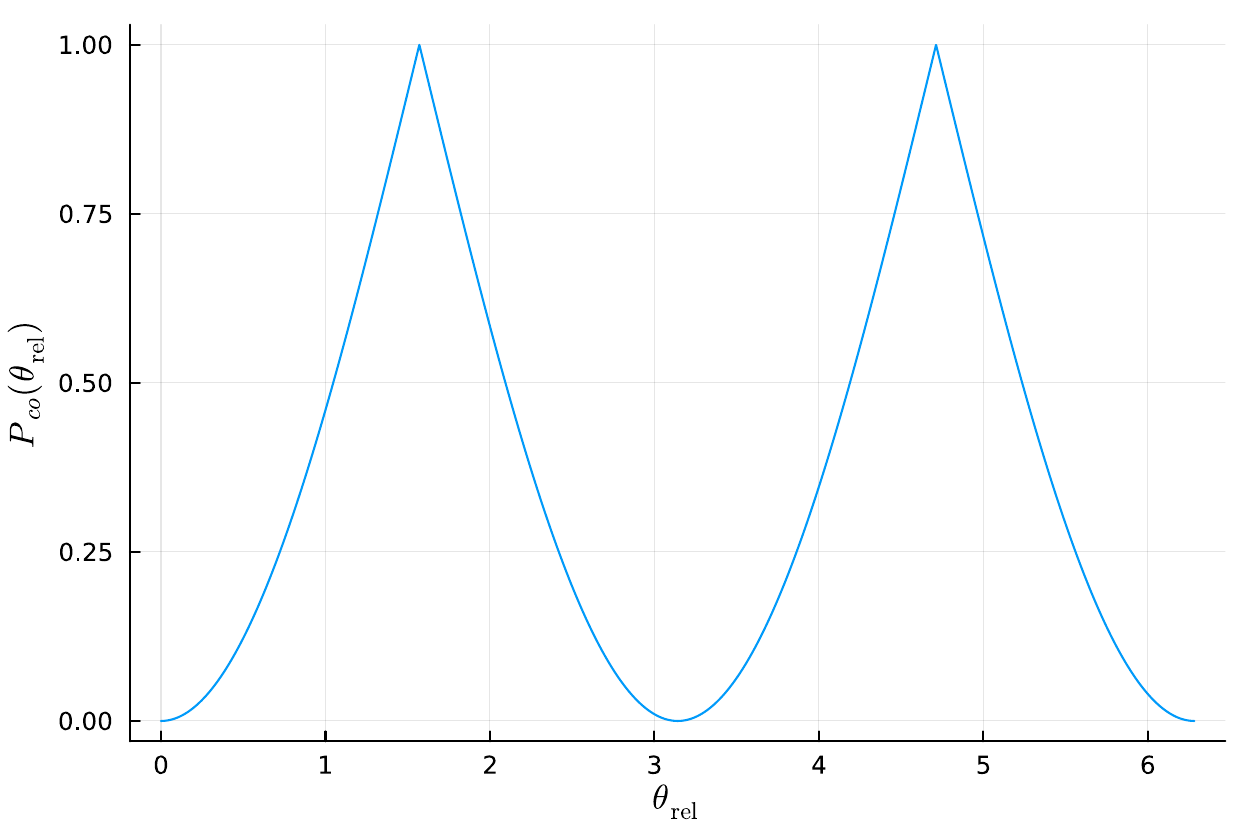}
    \caption{Output coincidence probability in the Hong-Ou-Mandel set-up in the sawtooth theory as a function of the initial relative phase $\theta_\mathrm{rel}$ between the two input states.}
    \label{fig:sawtooth-coincidence}
\end{figure}

We also compute the measurement signal $S(\varphi)=1-\abs{\sin(\theta_\mathrm{rel})\sin(\varphi)}$ in the Mach-Zehnder set-up in the sawtooth theory; the results are depicted in Fig.~\ref{fig:sawtooth-interference}. This plot shows that we once again obtain an interference pattern, especially if we are able to control the initial relative phase $\theta_\mathrm{rel}$.

\begin{figure}
    \centering
    \includegraphics[width=\linewidth]{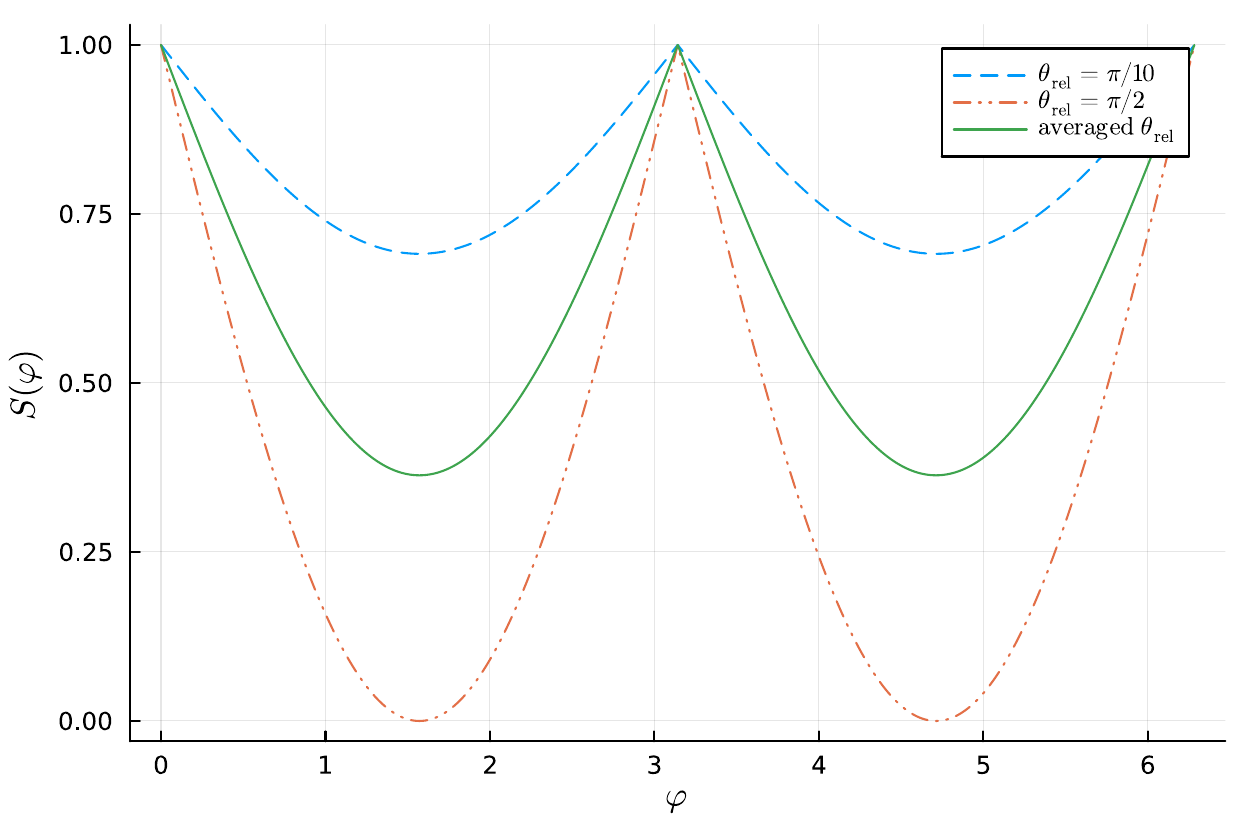}
    \caption{Output coincidence probability in the Mach-Zehnder interferometer in the sawtooth theory as a function of the phase shift $\varphi$ introduced in one arm of the interferometer for different initial relative phases $\theta_\mathrm{rel}=\pi/10$ (dashed blue line), $\theta_\mathrm{rel}=\pi/2$ (red dotted-dashed line), and averaged over all initial relative phases (green solid line).}
    \label{fig:sawtooth-interference}
\end{figure}


\section{Correlations in an interferometric set-up}
The previous section showed that the difference between the coincidence probabilities in the two set-ups (Hong-Ou-Mandel and Mach-Zehnder) obtained from the three considered theories is quantitative rather than qualitative. This raises the question of which phenomena, if any, are qualitatively different for quantum optics compared to other theories. Neither the Hong-Ou-Mandel effect nor the interferometric configuration is sufficient to address this question. It was suggested in Refs.~\cite{catani2023interference, hance2026noncontextual} that contextuality provides a more stringent criterion for nonclassicality than interference alone. We investigate here a simple set-up built on top of the Mach-Zehnder interferometer that enables quantum theory to violate contextuality inequalities derived from the Clauser-Horne-Shimony-Holt inequality in Ref.~\cite{plavala2024contextuality}. Violations of such contextuality inequalities exclude the possibility of certain classical hidden-variable models. Here we describe a set-up that is in principle suitable for testing contextuality inequalities and evaluate it in quantum, classical, and sawtooth theory.

Specifically, we consider the sequence of beam splitter, phase shifter, and beam splitter of a Mach-Zehnder interferometer, see Fig.~\ref{figHOM_interferometer}, as a transformation $\Psi_\varphi$ parameterised by the phase shift $\varphi$. This transformation can be applied in classical, quantum, and sawtooth theory. Moreover, we will consider two distinct measurements: the first one is the coincidence measurement we considered before; the second measurement is the ``Hadamard'' measurement that is the coincidence measurement preceded by an additional interferometer with $\varphi=\pi/4$, yielding an additional transformation $\Psi_{\pi / 4}$. As we will shortly see, this measurement corresponds to the measurement in the Hadamard basis in quantum theory.

Given a theory and these two measurements, we consider a set of input states, transform them using $\Psi_\varphi$ for all $\varphi$, and then measure with either of the two coincidence measurements. We then record the joint numerical range of both measurements, that is, to each state we assign coordinates $(x,y)$ based on the two coincidence probabilities of the measurements. For this, we calculate the coincidence probabilities as we vary the phase shift $\varphi$ in the interferometer and traverse through the set of initial states. These probability pairs form regions containing all measurement statistics accessible in the theory. Joint numerical ranges can be also understood as effective state spaces representing the effective generalised probabilistic theories \cite{plavala2023general} we obtained in this way. Differences between theories manifest as different shapes of the accessible state space, as we shall see in the following.

In quantum theory the Mach-Zehnder interferometer produces from $\ket{11}$ a superposition of
\begin{equation}
    \ket{a} = \ket{11} \quad\text{and}\quad
    \ket{b} = -\frac{i}{\sqrt2}\left(\ket{20} + \ket{02}\right)
\end{equation}
Hence $\Psi_\varphi$ implements a unitary transformation $U_\varphi$ in the subspace spanned by $\ket a$ and $\ket b$, described in the $\ket{a}$, $\ket{b}$ basis by the matrix
\begin{equation}
    U_\varphi =
    \begin{pmatrix}
    \cos(\varphi ) & \sin(\varphi ) \\
    -\sin(\varphi ) & \cos(\varphi )
    \end{pmatrix}.
\end{equation}
It was shown in Ref.~\cite{plavala2024contextuality} that a single qubit and measurements in two distinct bases (related by a $\pi/4$ rotation) are sufficient to violate a contextuality inequality. The two measurements we consider consist of two Mach-Zehnder interferometers corresponding to the unitary matrices $U_{\varphi=0}$ and $U_{\varphi=\pi/4}$, respectively.

\begin{figure}
    \centering
    \includegraphics[width=\linewidth]{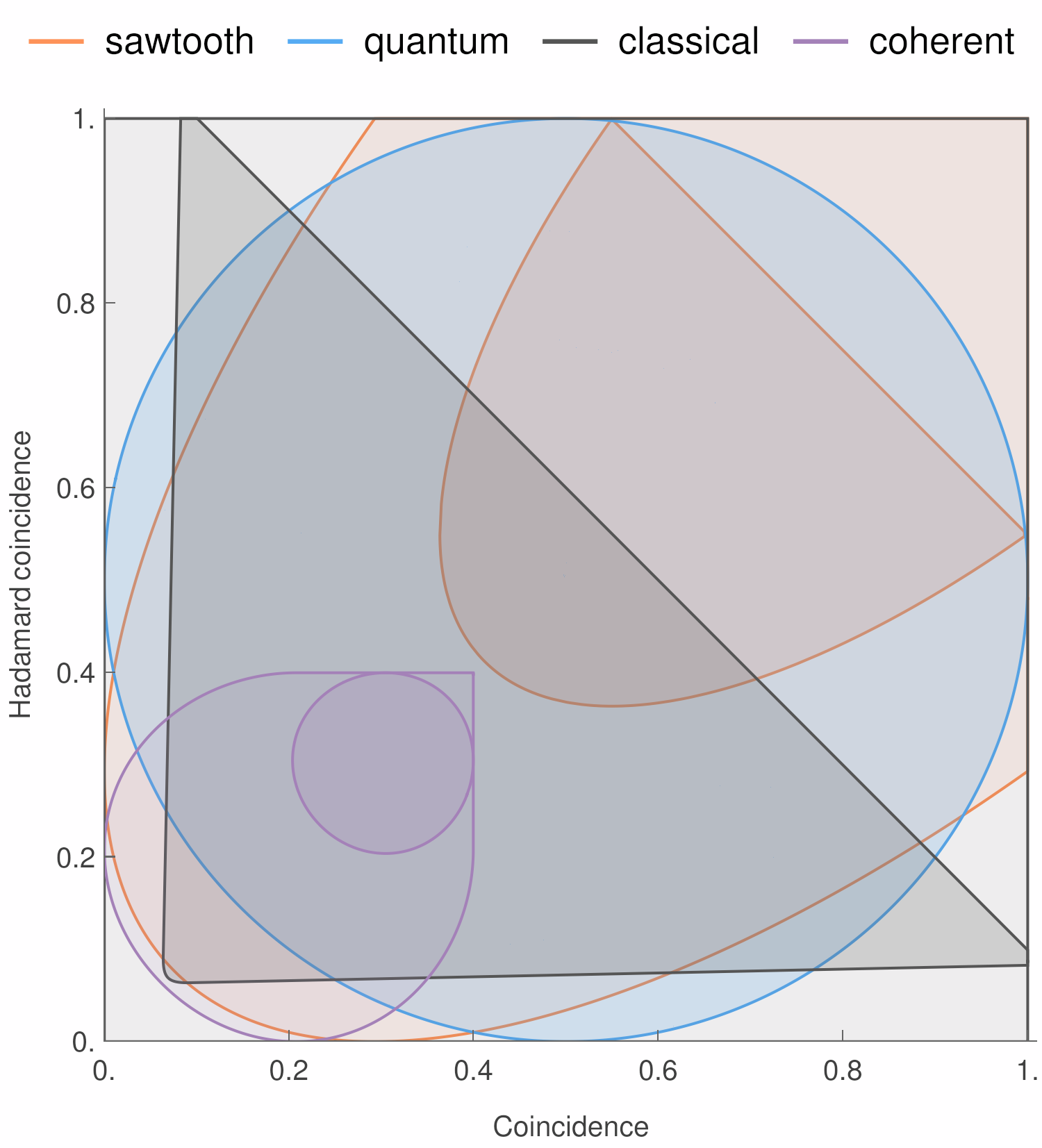}
    \caption{Joint numerical ranges for the Mach-Zehnder interferometer: After a Mach-Zehnder interferometer with phase shift $\varphi$, a coincidence measurement (horizontal axis) or a Hadamard coincidence measurement (vertical axis) is performed. A point $(x,y)$ is in a numerical range, if there is an input state and a phase shift giving a coincidence probability $x$ and a Hadamard coincidence probability $y$. In quantum theory, with initial state $\ket{11}$, we obtain a circle (blue). The other depicted sets are classical theory (black square and triangle), sawtooth theory (brown parabolic caps), and coherent quantum states (small purple shapes). The larger of each shape occurs when the initial relative phase $\theta_\mathrm{rel}$ can be controlled, the smaller shape when the average over $\theta_\mathrm{rel}$ is taken. In classical theory, the energy threshold $E_\mathrm{th}/E_\mathrm{cl} = 9/10$ is used.}
    \label{fig:effective-state-spaces}
\end{figure}

In both classical and sawtooth theory, we consider the same set of initial states as before, parameterised by the relative phase $\theta_\mathrm{rel}$ (see Eq.~\eqref{eq:sharpstate}, Eq.~\eqref{eq:classinips} and Sec.~\ref{sec:sawtooth}), as well as only the initial state averaged over the relative phase. We also consider coherent states in quantum mechanics as before. The results are shown in Fig.~\ref{fig:effective-state-spaces}. For quantum theory with initial state $\ket{11}$, the joint numerical range is a circle. This is not surprising, as this coincides with the corresponding section of the Bloch sphere. In classical theory with control of the initial relative phase $\theta_\mathrm{rel}$, the joint numerical range of the two measurements is the whole possible region of values; that is, any possible combination of probabilities for the two measurements can be reached, implying that the two measurements are independent in classical theory. This means that classical theory contains the numerical range of quantum theory and that classical theory can maximally violate any contextuality inequality based on these two measurements. We stress that this violation by a classical theory comes from the fact that the contextuality inequality on Ref.~\cite{plavala2024contextuality} is more restrictive and relies on the assumption that the two measurements can distinguish all values of the parameter $\varphi$, which is not satisfied in the specific classical theory used here. Nevertheless, this observation highlights that at the level of this restricted interferometric set-up, classical theory can reproduce the range of correlations accessible to quantum theory. Qualitative distinctions between the theories only emerge once additional structural constraints are imposed. Finally, sawtooth theory and quantum coherent states lead to predictions that are different from quantum theory but also compatible with classical theory.


\section{Conclusions}
We developed a model to study linear optics in the framework of generalised probability theories in phase space. We studied beam splitters and phase shifters in detail and investigated the boundary between classical and quantum optics by discussing the visibility of the Hong-Ou-Mandel coincidence dip and Mach-Zehnder interferometry in interferometric set-ups. Our work can be used as a basis to study more complicated photonic circuits represented. This will enable drawing a better boundary between classical and quantum photonic computers, either boson samplers or universal quantum computers. We found that in our framework, all of the theories, including classical and quantum theory, produce qualitatively similar predictions. Since arguably photonic quantum computing is based on similar ideas, our work suggests that drawing a clear distinction between quantum photonic computer and analogue classical optical computer is not trivial, and that the same is true for generalized theories, that are neither classical nor quantum.

\begin{acknowledgments}
This work was supported by the Deutsche Forschungsgemeinschaft (DFG, German Research Foundation, project numbers 447948357, 440958198, 563437167, and 572827488), the Sino-German Center for Research Promotion (Project M-0294), and the German Federal Ministry of Research, Technology and Space (Project QuKuK, Grant No.~16KIS1618K and Project BeRyQC, Grant No.~13N17292). MP acknowledges support from the Niedersächsisches Ministerium für Wissenschaft und Kultur.
\end{acknowledgments}

\bibliography{biblio}

@Article{plavala2023general,
  author  = {Pl{\'a}vala, Martin},
  journal = {Physics Reports},
  title   = {General probabilistic theories: An introduction},
  year    = {2023},
  pages   = {1--64},
  volume  = {1033},
  doi     = {10.1016/j.physrep.2023.09.001},
}

@Article{knill2001scheme,
  author  = {Knill, Emanuel and Laflamme, Raymond and Milburn, Gerald J.},
  journal = {Nature},
  title   = {A scheme for efficient quantum computation with linear optics},
  year    = {2001},
  number  = {6816},
  pages   = {46--52},
  volume  = {409},
  doi     = {10.1038/35051009},
}

@Article{bouchard2020two,
  author  = {Bouchard, Fr{\'e}d{\'e}ric and Sit, Alicia and Zhang, Yingwen and Fickler, Robert and Miatto, Filippo M. and Yao, Yuan and Sciarrino, Fabio and Karimi, Ebrahim},
  journal = {Reports on Progress in Physics},
  title   = {Two-photon interference: the {H}ong--{O}u--{M}andel effect},
  year    = {2020},
  number  = {1},
  pages   = {012402},
  volume  = {84},
  doi     = {10.1088/1361-6633/abcd7a},
}

@Article{sadana2019near,
  author  = {Sadana, Simanraj and Ghosh, Debadrita and Joarder, Kaushik and Lakshmi, A. Naga and Sanders, Barry C. and Sinha, Urbasi},
  journal = {Physical Review A},
  title   = {Near-100\% two-photon-like coincidence-visibility dip with classical light and the role of complementarity},
  year    = {2019},
  number  = {1},
  pages   = {013839},
  volume  = {100},
  doi     = {10.1103/physreva.100.013839},
}

@Article{plavala2023generalized,
  author  = {Pl{\'a}vala, Martin and Kleinmann, Matthias},
  journal = {Physical Review A},
  title   = {Generalized dynamical theories in phase space and the hydrogen atom},
  year    = {2023},
  number  = {5},
  pages   = {052212},
  volume  = {108},
  doi     = {10.1103/physreva.108.052212},
}

@Article{plavala2022operational,
  author  = {Pl{\'a}vala, Martin and Kleinmann, Matthias},
  journal = {Physical Review Letters},
  title   = {Operational Theories in Phase Space: Toy Model for the Harmonic Oscillator},
  year    = {2022},
  number  = {4},
  pages   = {040405},
  volume  = {128},
  doi     = {10.1103/physrevlett.128.040405},
}

@Article{plavala2024contextuality,
  author  = {Pl{\'a}vala, Martin and G{\"u}hne, Otfried},
  journal = {Physical Review Letters},
  title   = {Contextuality as a Precondition for Quantum Entanglement},
  year    = {2024},
  number  = {10},
  pages   = {100201},
  volume  = {132},
  doi     = {10.1103/physrevlett.132.100201},
}

@Article{brunner2014bell,
  author  = {Brunner, Nicolas and Cavalcanti, Daniel and Pironio, Stefano and Scarani, Valerio and Wehner, Stephanie},
  journal = {Reviews of Modern Physics},
  title   = {Bell nonlocality},
  year    = {2014},
  number  = {2},
  pages   = {419--478},
  volume  = {86},
  doi     = {10.1103/revmodphys.86.419},
}

@Article{case2008wigner,
  author  = {Case, William B.},
  journal = {American Journal of Physics},
  title   = {{W}igner functions and {W}eyl transforms for pedestrians},
  year    = {2008},
  number  = {10},
  pages   = {937--946},
  volume  = {76},
  doi     = {10.1119/1.2957889},
}

@Article{zhong2020quantum,
  author  = {Zhong, Han-Sen and Wang, Hui and Deng, Yu-Hao and Chen, Ming-Cheng and Peng, Li-Chao and Luo, Yi-Han and Qin, Jian and Wu, Dian and Ding, Xing and Hu, Yi and Hu, Peng and Yang, Xiao-Yan and Zhang, Wei-Jun and Li, Hao and Li, Yuxuan and Jiang, Xiao and Gan, Lin and Yang, Guangwen and You, Lixing and Wang, Zhen and Li, Li and Liu, Nai-Le and Lu, Chao-Yang and Pan, Jian-Wei},
  journal = {Science},
  title   = {Quantum computational advantage using photons},
  year    = {2020},
  number  = {6523},
  pages   = {1460--1463},
  volume  = {370},
  doi     = {10.1126/science.abe8770},
}

@InProceedings{aaronson2011computational,
  author     = {Aaronson, Scott and Arkhipov, Alex},
  booktitle  = {Proceedings of the Forty-Third Annual ACM Symposium on Theory of Computing},
  title      = {The computational complexity of linear optics},
  year       = {2011},
  pages      = {333--342},
  publisher  = {ACM},
  series     = {STOC'11},
  collection = {STOC'11},
  doi        = {10.1145/1993636.1993682},
}

@Article{catani2023interference,
  author  = {Catani, Lorenzo and Leifer, Matthew and Schmid, David and Spekkens, Robert W.},
  journal = {Quantum},
  title   = {Why interference phenomena do not capture the essence of quantum theory},
  year    = {2023},
  pages   = {1119},
  volume  = {7},
  doi     = {10.22331/q-2023-09-25-1119},
}

@Article{na2020classical,
  author  = {Na, Dong-Yeop and Chew, Weng Cho},
  journal = {Progress In Electromagnetics Research},
  title   = {Classical and quantum electromagnetic interferences: What is the difference?},
  year    = {2020},
  pages   = {1--13},
  volume  = {168},
  doi     = {10.2528/pier20060301},
}

@Article{aspect1981experimental,
  author  = {Aspect, Alain and Grangier, Philippe and Roger, G{\'e}rard},
  journal = {Physical Review Letters},
  title   = {Experimental Tests of Realistic Local Theories via {B}ell's Theorem},
  year    = {1981},
  number  = {7},
  pages   = {460},
  volume  = {47},
  doi     = {10.1103/physrevlett.47.460},
}

@Article{aspect1982experimental,
  author  = {Aspect, Alain and Dalibard, Jean and Roger, G{\'e}rard},
  journal = {Physical Review Letters},
  title   = {Experimental test of {B}ell's inequalities using time-varying analyzers},
  year    = {1982},
  number  = {25},
  pages   = {1804},
  volume  = {49},
  doi     = {10.1103/physrevlett.49.1804},
}

@Article{aspect1982experimental2,
  author  = {Aspect, Alain and Grangier, Philippe and Roger, G{\'e}rard},
  journal = {Physical Review Letters},
  title   = {Experimental Realization of {E}instein-{P}odolsky-{R}osen-{B}ohm Gedankenexperiment: A New Violation of {B}ell's Inequalities},
  year    = {1982},
  number  = {2},
  pages   = {91},
  volume  = {49},
  doi     = {10.1103/physrevlett.49.91},
}

@Article{hong1987measurement,
  author  = {Hong, C.-K. and Ou, Z.-Y. and Mandel, L.},
  journal = {Physical Review Letters},
  title   = {Measurement of subpicosecond time intervals between two photons by interference},
  year    = {1987},
  number  = {18},
  pages   = {2044},
  volume  = {59},
  doi     = {10.1103/physrevlett.59.2044},
}

@Article{shih1988new,
  author  = {Shih, Y. H. and Alley, Carroll O.},
  journal = {Physical Review Letters},
  title   = {New type of {E}instein-{P}odolsky-{R}osen-{B}ohm experiment using pairs of light quanta produced by optical parametric down conversion},
  year    = {1988},
  number  = {26},
  pages   = {2921},
  volume  = {61},
  doi     = {10.1103/physrevlett.61.2921},
}

@Book{grynberg2010introduction,
  author    = {Grynberg, Gilbert and Aspect, Alain and Fabre, Claude and Cohen-Tannoudji, Claude},
  publisher = {Cambridge University Press},
  title     = {Introduction to Quantum Optics: From the Semi-classical Approach to Quantized Light},
  year      = {2010},
  doi       = {10.1017/cbo9780511778261},
}

@Article{t2024hidden,
  author  = {'t Hooft, Gerard},
  journal = {Frontiers in Quantum Science and Technology},
  title   = {The hidden ontological variable in quantum harmonic oscillators},
  year    = {2024},
  pages   = {1505593},
  volume  = {3},
  doi     = {10.3389/frqst.2024.1505593},
}

@Unpublished{hance2026noncontextual,
  author        = {Hance, Jonte R. and Krnic, Jakov and Larsson, Jan-\AA{}ke},
  title         = {Noncontextual versus contextual interferometry},
  year          = {2026},
  archiveprefix = {arXiv},
  doi           = {10.48550/arXiv.2601.13109},
  eprint        = {2601.13109},
  eprinttype    = {arxiv},
  primaryclass  = {quant-ph},
}

@Unpublished{wetterich2025quantum,
  author        = {Wetterich, Christof},
  title         = {Quantum evolution with classical fields},
  year          = {2025},
  archiveprefix = {arXiv},
  doi           = {10.48550/arXiv.2510.24275},
  eprint        = {2510.24275},
  eprinttype    = {arxiv},
  primaryclass  = {quant-ph},
}

@article{Kwiat_1995,
 author = {Kwiat, Paul and Weinfurter, Harald and Herzog, Thomas and Zeilinger, Anton and Kasevich, Mark A.},
 doi = {10.1103/physrevlett.74.4763},
 journal = {Physical Review Letters},
 number = {24},
 pages = {4763},
 publisher = {American Physical Society (APS)},
 title = {Interaction-Free Measurement},
 volume = {74},
 year = {1995}
}

@article{Elitzur_1993,
 author = {Elitzur, Avshalom C. and Vaidman, Lev},
 doi = {10.1007/bf00736012},
 journal = {Foundations of Physics},
 number = {7},
 pages = {987},
 publisher = {Springer Science and Business Media LLC},
 title = {Quantum mechanical interaction-free measurements},
 volume = {23},
 year = {1993}
}

@article{Degen_2017,
 author = {Degen, C. L. and Reinhard, F. and Cappellaro, P.},
 doi = {10.1103/revmodphys.89.035002},
 journal = {Reviews of Modern Physics},
 number = {3},
 publisher = {American Physical Society (APS)},
 title = {Quantum sensing},
 url = {http://dx.doi.org/10.1103/revmodphys.89.035002},
 volume = {89},
 year = {2017},
 pages = {035002},
}
\onecolumngrid

\appendix


\section{Time evolution}\label{appA}
We use $\Phi_t$ to denote the super-operator that shifts the system in time by $t$, that is $\Phi_t(\rho)$ is the state $\rho$ of a system shifted forward by $t$ in time. For general phase-space theories, a generalisation of the Moyal bracket was suggested \cite{plavala2023generalized}, which, conveniently, for the case of harmonic Hamiltonians reduces to the Liouville equation \cite{plavala2023generalized}. Hence we have for harmonic Hamiltonians, $\dot \Phi_t(\rho) = \lbrace H, \Phi_t(\rho) \rbrace$ with the Poisson bracket $\{f,g\} = \frac{\partial f}{\partial x} \frac{\partial g}{\partial p} - \frac{\partial f}{\partial p} \frac{\partial g}{\partial x}$. For the Hamiltonian of the harmonic oscillator, the solution of the Liouville equation is
\begin{equation}
    \Phi_t(\rho)(x,p)= \rho\big(
    x \cos(\omega t)-\frac{x_0}{p_0}p \sin(\omega t),
    p \cos(\omega t)+\frac{p_0}{x_0}x \sin(\omega t) \big).
\end{equation}
Hence, when writing $\rho$ in polar coordinates, $x=x_0 r \cos(\theta)$ and $p=p_0 r \sin(\theta)$, we obtain
$\Phi_t(\rho)(r,\theta)= \rho(r,\theta+\omega t)$, as used in Eq.~\eqref{eq:timevo} of the main text.

\end{document}